      \def\dR{{\mathbb R}}
   \def\dZ{{\mathbb Z}}
\newcommand{\be}{\begin{equation}}
\newcommand{\ee}{\end{equation}}
\newcommand{\ba}{\begin{eqnarray}}
\newcommand{\ea}{\end{eqnarray}}
\newcommand{\baa}{\begin{eqnarray*}}
\newcommand{\eaa}{\end{eqnarray*}}
\newcommand{\bb}{}
\newcommand{\bi}[1]{\bibitem{#1}}
\newcommand{\lab}[1]{\label{#1}}
\newcommand{\re}[1]{(\ref{#1})}
\newcounter{my}
\newcommand{\he}%
   {\stepcounter{equation}\setcounter{my}%
   {\value{equation}}\setcounter{equation}0%
   }%
\newcommand{\she}%
   {\setcounter{equation}{\value{my}}%
    }%
\renewcommand\t{\tilde}
\def\sign{\operatorname{sign}}
\newtheorem{theorem}{Theorem}[section]
\newtheorem{lemma}[theorem]{Lemma}
\newtheorem{definition}[theorem]{Definition}
\theoremstyle{definition}
\newtheorem{remark}[theorem]{Remark}
\numberwithin{equation}{section}
\begin{document}

\title[Bannai-Ito polynomials and dressing chains]
{Bannai-Ito polynomials and dressing chains}

\author{Maxim Derevyagin}
\author{Satoshi Tsujimoto}
\author{Luc Vinet}
\author{Alexei Zhedanov}

\address{Department of Mathematics MA 4-2, Technische Universit\"at Berlin, Strasse des 17. 
Juni 136, D-10623 Berlin, Germany}

\address{Department of Applied Mathematics and Physics, Graduate School of Informatics, Kyoto University,
Sakyo-ku, Kyoto 606-8501, Japan}

\address{Centre de recherches math\'ematiques,
Universit\'e de Montr\'eal, P.O. Box 6128, Centre-ville Station,
Montr\'eal (Qu\'ebec), H3C 3J7,Canada}

\address{Donetsk Institute for Physics and Technology\\
83114 Donetsk, Ukraine \\}

\begin{abstract}
Schur-Delsarte-Genin (SDG) maps and Bannai-Ito polynomials are studied. 
SDG maps are related to dressing chains determined by quadratic algebras. 
The Bannai-Ito polynomials and their kernel polynomials -- the complementary Bannai-Ito
polynomials -- are shown to arise in the framework of the SDG maps.
\end{abstract}

\keywords{Darboux transformations, dressing chains, orthogonal polynomials, Schur-Delsarte-Genin map, 
Bannai-Ito polynomials, quadratic algebras. 
AMS classification: Primary 42C05; Secondary  17B80, 33C45, 47B36}

\maketitle

\section{Introduction}

It is well known that periodic dressing chains for Schr\"odinger operators can be used to characterize
finite-gap potentials~\cite{VSh}. There were also some attempts to adapt the scheme for tridiagonal operators
and orthogonal polynomials~\cite{GH96},~\cite{SZ95},~\cite{SZ97}. Naturally, this research leads to the following question: what families of orthogonal polynomials
can be described via periodic dressing chains for tridiagonal operators?  

In this paper we pursue the exploration of this question and study the interplay between dressing chains
for tridiagonal matrices and known families of orthogonal polynomials. More precisely, the goal of the present paper
is twofold. First, we propose a way to define Darboux transformations for bilinear pencils of the form
\[
A-\lambda B-x I,
\] 
where $A$, $B$ are tridiagonal semi-infinite matrices and $I$ is the identity operator. 
One shall see that these transformations lead to dressing chains related to quadratic algebras. 
Remarkably, the pencils $A-\lambda B-x I$ appear in different contexts~\cite{BMW},~\cite{BW10},~\cite{ER}
(see also~\cite{Stanton},~\cite{ST98}, and~\cite{Uchiyama}, where similar objects arise). Second, we construct as an
explicit example, the dressing chain related to the Bannai-Ito polynomials~\cite{BI},~\cite{TVZh12}. 
Its presentation in an open form will require to set the Bannai-Ito polynomials
in the framework of Schur-Delsarte-Genin maps. 

The paper is organized as follows. In Section 2 we define the generic Darboux transformations for tridiagonal matrices
and identify the relation between 1-periodic dressing chains and quadratic algebras. Examples of dressing chains
generated by orthogonal polynomials on the unit circle are given in Section~3. The next section is concerned with
Schur-Delsarte-Genin maps, which are essentially the bridges between dressing chains  and orthogonal
polynomials on the unit circle. The rest of the paper deals with the Bannai-Ito polynomials as an explicit
example. Namely, in Section~5 we transform the complementary Bannai-Ito polynomials
into polynomials on the unit circle. These new polynomials turn out to be related to symmetrized Racah-Wilson
polynomials, as shown in Section~6. Finally, in Section~7, we discuss the role of the Bannai-Ito polynomials
in Schur-Delsarte-Genin maps.

\section{Generic Darboux transformations}

In this section we give the definition of the generic Darboux transformations and identify quadratic algebras 
of tridiagonal operators as dressing chains. 

Before giving the general abstract idea, let us briefly recall the interconnection between discrete Darboux 
transformations and dynamical systems. Consider a finite dimensional case and let $J$ be a finite Jacobi matrix, that is,
\[
 J=\begin{pmatrix}
{a}_{0}   &{b}_{0}&       &\\
{b}_{0}   & {a}_{1}   & \ddots&\\
        & \ddots   & \ddots &{b}_{N-1}\\
&       &{b}_{N-1} &{a}_N\\
\end{pmatrix},
\]
where $a_j\in\dR$ and $b_j>0$. In addition, assume that $J$ is positive definite. Then $J$ admits the Cholesky
decomposition
\[
 J=LL^*,
\]
where $L$ is a lower triangular matrix and $\det L\ne 0$. One can see that in this case $L$ is bidiagonal.
Recall~\cite{BM},~\cite{ZhS} that {\it the Darboux transformation of $J$} is defined as follows
\[
 J=LL^*\mapsto \t J=L^*L.
\]
Clearly, $\t J$ is also a Jacobi matrix and the spectrum $\sigma(\t J)$ of $\t J$ coincides with the spectrum $\sigma(J)$ of J.
Indeed, from the definition we get that
\[
 \t J L^*=L^*J.
\]
Thus $(\t J-\lambda I)L^*=L^*(J-\lambda I)$, where $\lambda$ is a complex number and 
$I$ is the corresponding identity operator. Hence, $\sigma(\t J)=\sigma(J)$ since $\det L\ne 0$. Furthermore,
the eigenvectors $\t f_j$ and $f_j$ of $\t J$ and $J$, respectively, are related as follows
\[
 \t f_j=L^*f_j.
\]
 
It follows that the iterates of the Darboux transformation applied to $J$ are isospectral and related
to an isospectral flow~\cite{W84}.

Now we are in a position to give a more formal definition of the Darboux transformation of tridagonal semi-infinite
matrices
\begin{equation}\label{JacForm}
 J= \begin{pmatrix}
 \alpha_0 & \beta_0 &  &    \\
 \gamma_0 & \alpha_1 &  \beta_1&   \\
  &  \gamma_1 & \alpha_2 &   \beta_2  &  \\
     &  & \gamma_2 &  \ddots  \\
     &&&\\
\end{pmatrix}, \quad |\beta_j|+|\gamma_j|\ne 0, \quad  j\in\dZ_+, 
\end{equation}
 which are the discrete analog of the Schr\"odinger operator and appear in the theory of orthogonal polynomials and
continued fractions~\cite{Simon},~\cite{SZ}.

\begin{definition}\label{genDarboux}
Let $J$ be of the form~\eqref{JacForm} and let $\psi$
be a column vector defined by the condition
\be J \psi =0.
\lab{J_psi} \ee 
Also, let $D$ be a semi-infinite nonzero matrix such that the product $D\psi$ makes sense. Introduce a new vector 
\be \t \psi = D \psi. \lab{D_psi} \ee 
Now, suppose that there exist semi-infinite matrices
$K,\t J$  that satisfy the intertwining relation 
\be \t J D = K J. \lab{JDKJ} \ee 
Then the operator $\t J$ is called the Darboux transformation of $J$ and it is easy to see that 
 \be \t J \t \psi =0.
\lab{tJtpsi} \ee
\end{definition}
 We will call this scheme the generic Darboux
scheme and formula \re{D_psi} describes the Darboux transformation
from the vector $\psi$ to the vector $\t \psi$. Formula \re{JDKJ} is an
algebraic relation between the operators $J$ and $\t J$.

Obviously, the generic Darboux scheme gives a family of Darboux transformations 
depending on the choice of $D$. For example, for any two Jacobi matrices $J$ and $\t J$ one can trivially take
$D=cJ$ and $K=c\t J$, where $c$ is a nonzero complex number. Therefore, any Jacobi matrix $\t J$ 
can be considered as the generic Darboux transformation of the given matrix $J$. Clearly, 
this is not too helpful and generally leads to trivial results. However, there are many nontrivial explicit examples
which can be obtained by putting some requirements on $D$, $K$ and by letting the matrices 
$J$, $\t J$ depend on parameters. Let us consider the simplest such case: 
\begin{equation}\label{ClJacPen}
 J(\lambda) = A-\lambda I, 
\end{equation}
where $\lambda$ is a parameter and $A$ is a monic Jacobi matrix with real entries that
do not depend on $\lambda$. In this case, the entries of $\psi$ are the corresponding 
orthogonal polynomials evaluated at $\lambda$. Let us fix $\lambda$ and suppose that $J(\lambda)$ admits
the $LU$-decomposition, that is, $J(\lambda)=LU$, where $L$ and $U$ are lower and upper triangular matrices.
Then introducing $\t J(\lambda)=\t A-\lambda I=UL$ we see that~\eqref{JDKJ} is satisfied
for $K=D=U$. Besides, the entries of $\t \psi=D\psi$ are related to the orthogonal polynomials 
corresponding to~$\t A$~\cite{BM},~\cite{ZhS} (see also~\cite{DD11} for some non-classical cases).

Another example can be obtained by considering the following form of the matrices $J$ and $\t J$:
 \be J(\lambda) = A -
\lambda B, \quad \t J(\lambda) = \t A - \lambda \t B, \lab{J_AB}
\ee 
where $A$ and $B$ are some tridiagonal matrices. Assuming that the operators
$D$ and $K$ do not depend on $\lambda$ we obtain that \re{JDKJ} is
equivalent to the following  condition~\cite{WE89},~\cite{ZheBRF} 
\be \t A D = K A, \quad \t B D = K B.
\lab{tAB_DK} \ee 
Actually,  this entails the generalized eigenvalue problem 
\be A \psi = \lambda B \psi. \lab{Apsi_Bpsi}
\ee 
Note that such problems appear in mechanics~\cite{Gladwell}, in the theory of 
biorthogonal rational functions~\cite{ZheBRF}, and in interpolation problems~\cite{DZh}.
In this case, the Darboux transformation \re{D_psi} gives a new eigenvector
$\t \psi$ of the new generalized eigenvalue problem \be \t A \t
\psi = \lambda \t B \t \psi. \lab{tGEVP} \ee

Finally, we are ready to discuss the generalization that is of interest in this paper. 
Namely, consider the following form of $J$: 
\be J(\lambda,x)= A -\lambda B -x I, 
\lab{JtJ_xl} \ee 
where the tridiagonal matrices $A$ and $B$ are independent of the real parameters $\lambda$ and $x$. We see that 
when one fixes $\lambda$ the problem for $J(\lambda, x)$ reduces to~\eqref{ClJacPen} and
 $J(\lambda, x)$ becomes~\eqref{J_AB} if $x$ is fixed.

Note that the initial  condition for $\psi$ 
\be (A - \lambda B - x)\psi=0
\lab{ABxl} \ee 
is a typical 2-parameter problem
\cite{Atkinson}, \cite{Sleeman}. Our purpose is to describe 1-periodic dressing chains~\cite{VSh} related to such problems.
Before giving the precise definition, let us specify the definition of the Darboux transformation of $J$
of the form~\eqref{JtJ_xl}. In this case, the Darboux transformation is obtained from the generic Darboux scheme
(see Definition~\ref{genDarboux}),
where we want the  intertwining (Darboux) operators $D$ and $K$ to have the following form 
 \be D = r_1 A + r_2 B
+ r_0 I, \quad  K = s_1 A + s_2 B + s_0 I, \lab{DK_lin} \ee where $r_i, s_i$ are some real numbers. 
As always, this definition
can lead to nontrivial results if the matrix $D$ is not proportional to $J$. 
\begin{definition}
Let $J$ be a tridiagonal matrix given by~\eqref{JtJ_xl}. We say that $J$ generates
a 1-periodic dressing chain if the Darboux transformation~\eqref{JDKJ} subject 
to the Ansatz~\eqref{DK_lin} gives a tridiagonal matrix of the form 
\begin{equation}\lab{JtJ_x2}
\t J=\widetilde{J(\lambda,x)}= A - \t \lambda B - \t x I,
\end{equation}
where $\t x$, $\t\lambda$ are some numbers which can be considered as the transformation
of $x$ and $\lambda$. Notice that $\t J$ generates a new 2-parameter problem 
\be (A - \t
\lambda B - \t x) \t \psi=0 \lab{D_EVP} \ee 
similar to~\eqref{ABxl} but with transformed
parameters $\t \lambda$ and  $\t x$.
\end{definition}

In other words, it means that the Darboux transformation applied to $J(\lambda,x)$ belongs
to the linear span generated by $A$, $B$, and $I$.  The following statement gives 
a description of the 1-periodic dressing chains related to $J(\lambda,x)$.  

\begin{theorem}\label{QuadrAlgebra}
Let the matrix $J$ be given by~\eqref{JtJ_xl}. If $J$ generates a 1-periodic dressing 
chain then the following quadratic relation 
\be \xi_1 A^2 + \xi_2 B^2 + \xi_3 AB + \xi_4 BA
+ \eta_1 A + \eta_2 B + \zeta I =0, \lab{gen_quad} \ee 
has a solution in $\xi_i$, $\eta_i$, and $\zeta$. This dressing chain is nontrivial
if the condition
\[
 |\xi_1|+|\xi_2|+|\xi_3|+|\xi_4|+|\eta_1|+|\eta_2| +|\zeta|\ne 0
\]
is satisfied for the solution.
\end{theorem}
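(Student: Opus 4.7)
The plan is to expand the intertwining relation $\tilde J D = K J$ directly using the ansatz. Substituting $\tilde J = A - \tilde\lambda B - \tilde x I$, $J = A - \lambda B - x I$, and the linear forms~\eqref{DK_lin} for $D$ and $K$, both sides become sums of operator products drawn from the seven-element set $\{A^2, B^2, AB, BA, A, B, I\}$. Bringing everything to one side and grouping by this basis yields an operator identity of precisely the form~\eqref{gen_quad}, where the coefficients $\xi_i, \eta_i, \zeta$ are explicit polynomials in $r_0, r_1, r_2, s_0, s_1, s_2$ and the parameters $\lambda, \tilde\lambda, x, \tilde x$. For instance, matching the $A^2$, $AB$, and $BA$ terms gives $\xi_1 = r_1 - s_1$, $\xi_3 = r_2 + \lambda s_1$, and $\xi_4 = -\tilde\lambda r_1 - s_2$, with analogous expressions for the remaining four coefficients.

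This already proves existence of a solution to~\eqref{gen_quad}. For nontriviality, suppose all seven coefficients $\xi_i, \eta_i, \zeta$ vanish. The $A^2$, $AB$, and $BA$ equations then force $r_1 = s_1$, $r_2 = -\lambda r_1$, and $s_2 = -\tilde\lambda r_1$, so that
\[
D = r_1(A - \lambda B) + r_0 I = r_1 J + (r_1 x + r_0) I.
\]
By~\eqref{J_psi} this gives $\tilde\psi = D\psi = (r_1 x + r_0)\psi$, which is merely a scalar multiple of $\psi$, so the Darboux scheme produces no new eigenvector. This is precisely the degenerate situation excluded in the discussion following~\eqref{DK_lin}; the displayed inequality $|\xi_1|+\cdots+|\zeta|\ne 0$ is exactly what rules it out.

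The only real obstacle is bookkeeping: carefully distinguishing $\lambda$ from $\tilde\lambda$ and $x$ from $\tilde x$ when expanding $\tilde J D$ versus $K J$, and tracking signs through each of the seven coefficient equations. Once the expansion is performed correctly, both existence and nontriviality follow immediately.
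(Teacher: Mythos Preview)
Your proposal is correct and follows exactly the approach the paper takes: the paper's proof is literally the single sentence ``The proof is immediate by making the substitution~\eqref{JtJ_xl},~\eqref{JtJ_x2} and~\eqref{DK_lin} to~\eqref{JDKJ},'' and you have simply carried out that substitution explicitly and checked the coefficients. Your additional argument for the nontriviality clause (showing that vanishing of all seven coefficients forces $D\psi$ to be a scalar multiple of $\psi$) goes beyond what the paper writes out, but is consistent with the intent expressed after~\eqref{DK_lin}.
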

\begin{proof}
 The proof is immediate by making the substitution~\eqref{JtJ_xl},~\eqref{JtJ_x2} and~\eqref{DK_lin} to~\eqref{JDKJ}.
\end{proof}

\begin{remark}
It should be mentioned that martingale orthogonal polynomials satisfy~\eqref{ABxl}~\cite{BMW}. Furthermore,
the quadratic algebraic relations~\eqref{gen_quad} were considered in~\cite{BMW} and~\cite{BW10} 
in connection with quadratic harnesses and martingale orthogonal polynomials. Actually, it was shown that
the Jacobi matrices corresponding to the Askey-Wilson polynomials satisfy~\eqref{gen_quad}. Moreover, the quadratic 
algebra~\eqref{gen_quad} is related to an exactly solvable box-and-ball kinetic model in physics~\cite{ER}.
\end{remark}

Note that the algebra~\re{gen_quad} remains the same under the generic affine
transformation $A \to \alpha_1 A + \alpha_2 B + \alpha_0$, 
$B \to \beta_1 A + \beta_2 B + \beta_0$. Using this observation we can put the quadratic algebra
\re{gen_quad} in canonical forms. In fact, such a program  was realized for self-adjoint operators in~\cite{OS}.
We are not going to do this in the present paper. Here we shall give a particular example of 
the quadratic algebra~\re{gen_quad} that is related to
Schur-Delsarte-Genin maps~\cite{DVZh} and not covered in~\cite{BMW} and~\cite{BW10}.
One of the main goals of this paper is hence to show that the Jacobi matrices corresponding to the big -1 Jacobi polynomials
and the Bannai-Ito polynomials satisfy the relation~\eqref{gen_quad} thereby extending the class of polynomials
related to quadratic algebras.

\section{Orthogonal polynomials on the unit circle and quadratic algebras}

In this section we introduce a 2-parameter pencil and show that it is related to quadratic algebras and therefore to
1-periodic dressing chains. 

Let us start by recalling that monic orthogonal polynomials $\Phi_n(z)=z^n + O(z^{n-1})$ on the unit circle 
satisfy the recurrence relation
\be
\Phi_{n+1}(z) = z\Phi_n(z) - \bar a_{n} \Phi_n^*(z),
\lab{Sz_rec} \ee
with the initial condition $\Phi_0(z) =1$; here the polynomials $\Phi_n^*(z)$ are defined as follows
$$
\Phi_n^*(z) = z^n \overline{\Phi_n(1/\bar z)}
$$
(see~\cite{Simon}).
The recurrence coefficients $a_n=-\overline{\Phi_{n+1}(0)}$ are called
reflection parameters (sometimes also referred to as the Schur, Geronimus,
Verblunsky, ... parameters). 
It follows that $|a_n|<1$ for $n=0,1,2,\dots$. Conversely (see~\cite{Simon})
polynomials satisfying~\eqref{Sz_rec} with this condition on $a_n$ are orthogonal.

{\it In what follows we will consider only the case where the $a_n$ are real}.

For convenience, we always assume that
\be a_{-1}=-1 \lab{a_-1}.
\ee

The dual recurrence relation is
\be
\Phi_{n+1}^*(z) = \Phi_n^*(z) - a_n z \Phi_n(z) \lab{dual_rec_Phi}. \ee

It is standard to introduce the parameters
$$
r_n = \sqrt{1-a_n^2}
$$
(in our case, the square root is assumed to be positive).

Let us introduce the block-diagonal matrices
\vspace{5mm}
\be
 L =
 \begin{pmatrix}
  a_{0} & r_{0} &  &    \\
  r_{0} & -a_{0} &  &   \\
   &  &              a_{2} & r_{2}  \\
   &  &              r_{2} & -a_{2}  \\
  &  &    & &          a_{4} & r_{4}  \\
   &  &    & &         r_{4} & - a_{4}  \\
&   &  & & & & \ddots  \\
 \end{pmatrix}
\lab{L_def} \ee
and 
\vspace{5mm}
\be
 M =
 \begin{pmatrix}
1 \\
& a_{1} & r_{1} &  &    \\
  &r_{1} & -a_{1} &  &   \\
   & &  &               a_{3} & r_{3}  \\
   & & &              r_{3} & -a_{3}  \\
  &  & &   & &           a_{5} & r_{5}  \\
   &  & &   & &         r_{5} & - a_{5}  \\
&   &  & & & & & \ddots  \\
 \end{pmatrix}.
\lab{M_def} \ee
Both $L$ and $M$ are block-diagonal unitary matrices. Define also the 5-diagonal unitary matrix $U$ \cite{Watkins}, \cite{Simon}
\be
U=LM \lab{def_U} \ee
(which is called the "Zigzag matrix" in \cite{Watkins} and the "CMV matrix" in \cite{Simon}).

Next, consider a linear pencil of matrices
\be
K(\lambda) = L + \lambda M \lab{lin_K} \ee
with an arbitrary parameter $\lambda$.
So, the matrix $K(\lambda)$ is a nondegenerate symmetric
tri-diagonal matrix
\[
 K(\lambda) =
 \begin{pmatrix}
a_0 + \lambda & r_0 \\
r_0 & -a_{0}+\lambda a_1 & \lambda r_{1} &  &    \\
  &\lambda r_{1} & a_2-\lambda a_{1} &  r_2 &   \\
   & &  r_2 & -a_2+\lambda a_3 &   \lambda r_3 &  \\
  &   &  & &  \ddots  \\
 \end{pmatrix}.
\]
Here, nondegenerate means that all off-diagonal entries of the matrix $K(\lambda)$ are nonzero
if $\lambda \ne 0$ and $|a_i| \ne 1$.

Hence one can define a family of formal monic orthogonal
polynomials $Q_n(x;\lambda)$ depending on the argument $x$ and an
additional parameter $\lambda$. These polynomials are uniquely defined
 through the 3-term recurrence relation
\be
Q_{n+1}(x;\lambda) + b_n(\lambda) Q_n(x;\lambda) + u_n(\lambda)
Q_{n-1}(x;\lambda) = x Q_n(x;\lambda), \lab{rec_Q_lambda}
\ee
where
\be b_n(\lambda) = \left\{a_n - \lambda a_{n-1} \quad \mbox{if}
\quad n \quad \mbox{is} \quad \mbox{even}  \atop \lambda a_n -
a_{n-1} \quad \mbox{if} \quad n \quad \mbox{is} \quad \mbox{odd}
\right . ,\lab{b_lambda} \ee and \be u_n(\lambda) =
\left\{\lambda^2 (1-a_{n-1}^2) \quad \mbox{if} \quad n \quad
\mbox{is} \quad \mbox{even} \atop  1-a_{n-1}^2 \quad \mbox{if}
\quad n \quad \mbox{is} \quad \mbox{odd}  \right . .\lab{u_lambda}
\ee Note that the eigenvalue problem for the orthogonal polynomials
$Q_n(x;\lambda)$ can be presented in algebraic form as \be (L +
\lambda M -x I) \vec q =0, \lab{lambda_x_q} \ee where $\vec q$ is a
vector constructed from the (non-monic) polynomials $Q_n(x;\lambda)$.
Equation \re{lambda_x_q} contains two parameters $\lambda$ and $x$
and belongs to the class of the so-called multi-parameter eigenvalue problems
\cite{Atkinson}, \cite{Sleeman}.

\begin{theorem}\label{LM_DrCh}
 Let $J=J(\lambda,x)=L+\lambda M-xI$. Then there exists a nontrivial choice of parameters $\t x$ and $\t\lambda$ such that
the Darboux transformation of $J$ subject to~\eqref{DK_lin} is of the form
\[
 \t J=\widetilde{J(\lambda,x)}=L+\t\lambda M-\t x I.
\]
Thus,  $J$ gives an explicit example of nontrivial 1-periodic dressing chains.
\end{theorem}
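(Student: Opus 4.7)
The plan is to exploit the involutive structure of $L$ and $M$ to directly construct, in closed form, the intertwining operators and the transformed parameters of the Darboux map.

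The essential algebraic input is that each $2\times 2$ diagonal block
\[
\begin{pmatrix} a_n & r_n \\ r_n & -a_n \end{pmatrix}
\]
of $L$ and $M$ squares to the identity, since $a_n^2 + r_n^2 = 1$. Together with the trivial $1 \times 1$ block of $M$, this gives the two involutive identities $L^2 = I$ and $M^2 = I$. Each is a nontrivial quadratic relation of type~\eqref{gen_quad} (for instance $\xi_1 = 1$, $\zeta = -1$ and all other coefficients zero), so Theorem~\ref{QuadrAlgebra} already suggests that $J(\lambda,x)$ should generate a 1-periodic dressing chain.

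To verify this by explicit construction and to exhibit the transformed parameters, I would take the lower intertwining operator to be $D = L$ (that is, $r_1 = 1$ and $r_0 = r_2 = 0$ in~\eqref{DK_lin}) and seek $K = s_0 I + s_1 L + s_2 M$ together with $\t J = L + \t\lambda M - \t x I$ satisfying $\t J D = K J$. Using $L^2 = M^2 = I$, both sides of this identity become explicit combinations of the five matrices $I, L, M, LM, ML$, and matching coefficients yields a small linear system in $s_0, s_1, s_2, \t\lambda, \t x$. The coefficient of $LM$ forces $s_1 = 0$, after which the remaining equations are solved (for $\lambda \neq 0$) by
\[
\t x = \frac{x}{x^2 - \lambda^2},\qquad \t\lambda = -\frac{\lambda}{x^2 - \lambda^2},\qquad K = -\t x\, I + \t\lambda\, M.
\]
Since $D = L$ is not proportional to $J = L + \lambda M - xI$ and $\t J$ has the prescribed form, this realizes a nontrivial 1-periodic dressing chain.

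The step that requires care is the verification that $I, L, M, LM, ML$ are linearly independent as semi-infinite matrices, so that coefficient matching really produces the full linear system above rather than a degenerate one. This is transparent from the block structure: $L$ and $M$ have their $2\times 2$ blocks shifted by one index so that their off-diagonal supports do not overlap, and the pentadiagonal products $LM$ and $ML$ populate entries at distance two from the main diagonal which cannot be absorbed into linear combinations of $I, L, M$; a direct comparison of, say, the $(0,2)$-entries of $LM$ and $ML$ also shows $LM \neq ML$, ensuring the two new matrices are themselves independent.
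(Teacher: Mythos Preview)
Your proof is correct, but it takes a genuinely different route from the paper's. The paper keeps the intertwiners $D$ and $K$ in their full three--parameter form~\eqref{DK_lin}, writes out the five scalar equations obtained by matching the coefficients of $I,L,M,LM,ML$ in $\t J D = KJ$, then imposes the normalization $r_1=s_1=1$ and eliminates $r_0,s_0$ to arrive at a single quadratic relation in $\t x$ (with $\t\lambda$ as parameter). You instead fix $D=L$ at the outset (i.e.\ $r_1=1,\ r_0=r_2=0$), which forces $s_1=0$ and immediately yields the closed-form solution
\[
\t x=\frac{x}{x^{2}-\lambda^{2}},\qquad \t\lambda=-\frac{\lambda}{x^{2}-\lambda^{2}},\qquad K=-\t x\,I+\t\lambda\,M.
\]
Your choice has the advantage of exhibiting the transformed parameters explicitly and making nontriviality immediate: $D=L$ is never proportional to $J$ when $(\lambda,x)\neq(0,0)$, and $(\t x,\t\lambda)=(x,\lambda)$ would force $x^{2}-\lambda^{2}$ to equal both $1$ and $-1$. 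The paper's route, by contrast, in principle parameterizes a whole family of intertwiners rather than a single one. One small point: in addition to $\lambda\neq 0$ you should exclude the degenerate locus $x^{2}=\lambda^{2}$, where your formulas for $\t x,\t\lambda$ are undefined. Your final paragraph on the linear independence of $I,L,M,LM,ML$ is a useful addition that the paper leaves implicit.
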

\begin{proof}
 For this special case, relation~\eqref{gen_quad} reduces to the following
\[
 \t\lambda r_1=s_2, \quad r_2=\lambda s_1, \quad \lambda s_0-xs_2=\t\lambda r_0-xr_2, \quad
r_0-\t x r_1=s_0-xs_1,
\]
\[
 r_1+\t \lambda r_2 -\t x r_0=s_1+\lambda s_2-x s_0.
\]
Putting $s_1=r_1=1$ we arrive at
\[
 \t x^2\lambda -\t x(x\lambda+x\t\lambda) +\lambda x^2=0,
\]
which is solvable. Moreover, it has a nontrivial solution (i.e. it is different from
the trivial solution $\t x=x$, $\t \lambda=\lambda$).
 \end{proof}
\begin{remark}
 Notice that  
\begin{equation}\label{AComR}
J(\lambda,0)J(-\lambda,0)-qJ(-\lambda,0)J(\lambda,0)=2(1-|\lambda|^2)I. 
\end{equation}
for $q=-1$. It is worth mentioning that some examples of the pencils of Jacobi matrices 
verifying~\eqref{AComR} for $q\in (-1,1)$ were given in~\cite{BW10} and~\cite{BMW}.
\end{remark}

\begin{remark}
Interestingly,~\eqref{AComR} becomes an anti-commutation relation at the same time as
the spectrum of $K(\lambda)=L+\lambda M$ becomes an interval (the spectrum 
consists of two disjoint intervals for $|\lambda|\ne 1$)~\cite{DVZh}. 
\end{remark}

Taking into account Theorem~\ref{LM_DrCh}, we see that the Jacobi matrices corresponding to the 
little and big -1 Jacobi polynomials~\cite{VZ_little},~\cite{VZ_big},~\cite{VZ_Bochner} give 
explicit examples of dressing chains.
The rest of the paper is devoted to constructing a realization related to the 
Bannai-Ito polynomials.

\begin{remark}
The main property of $J(\lambda,x)$ we used in the proof of Theorem~\ref{LM_DrCh} is the following one
\[
 L^2=M^2=I.
\]
 This property is based on the fact that $|a_j|\le 1$. However, it is easy to see how to modify the matrices $L$ and $M$
in order to preserve the above property for an arbitrary real sequence $a_j$. Indeed, let us introduce the block-diagonal matrix
\[
L =
 \begin{pmatrix}
  a_{0} & r_{0} &  &    \\
  \epsilon_0r_{0} & -a_{0} &  &   \\
   &  &              a_{2} & r_{2}  \\
   &  &             \epsilon_2 r_{2} & -a_{2}  \\
  &  &    & &          a_{4} & r_{4}  \\
   &  &    & &        \epsilon_4 r_{4} & - a_{4}  \\
&   &  & & & & \ddots  \\
 \end{pmatrix},
\]
and the block diagonal matrix 
\[
 M =
 \begin{pmatrix}
1 \\
& a_{1} & r_{1} &  &    \\
  &\epsilon_1r_{1} & -a_{1} &  &   \\
   & &  &               a_{3} & r_{3}  \\
   & & &             \epsilon_3 r_{3} & -a_{3}  \\
  &  & &   & &           a_{5} & r_{5}  \\
   &  & &   & &         \epsilon_5r_{5} & - a_{5}  \\
&   &  & & & & & \ddots  \\
 \end{pmatrix},
\]
where $r_j=\sqrt{|1-a_j^2|}$,  $\epsilon_j=-1$ if $|a_j|>1$, and $\epsilon_j=1$ if $|a_j|\le 1$.
Obviously, these new matrices $L$ and $M$ satisfy the main property : $L^2=M^2=I$.  
Furthermore, formulas~\eqref{rec_Q_lambda}, \eqref{b_lambda}, and~\eqref{u_lambda} remain the same.
However, in the general case, the matrix $K(\lambda)=L+\lambda M$ is not symmetric in a Hilbert space but 
is rather symmetric in a Krein space (for example, see~\cite{D09},~\cite{DD11}). 

Finally, note that the Szeg\H{o} orthogonal polynomials (satisfying~\eqref{Sz_rec}) with the 
condition $|a_j|>1$ were constructed in~\cite{VZh99}. Thus,
those polynomials are also related to dressing chains.
\end{remark}

\section{Schur-Delsarte-Genin maps}

In this section we give the precise definition of the Schur-Delsarte-Genin maps~\cite{DVZh}. 

In order to make~\eqref{rec_Q_lambda} more uniform, we will need the following special 
case of the Christoffel formula (see~\cite{Sz}).
\begin{lemma}\label{lemma1}
Let $P_n(x)$ be monic orthogonal polynomials satisfying the
recurrence relation
\be P_{n+1}(x) +(\theta - A_n - C_n)P_n(x) +
C_n A_{n-1} P_{n-1}(x) = xP_n(x), \lab{3-term_AC} \ee
with the standard initial conditions
\be
P_0=1, \; P_1(x) = x-\theta+A_0 .
\ee
Assume that the real coefficients $A_n,C_n$ are such that
$A_{n-1}>0,\; C_n
>0, \; n=1,2\dots$ and that $C_0=0$. In this case, for any real $\theta$ 
one has that $P_n(\theta)\ne 0$ for $n=1,2,\dots$.
Therefore, one can define the new polynomials \be \t P_n(x) = \frac{P_{n+1}(x) - A_n
P_n(x)}{x-\theta}. \lab{tP_CT1} \ee Then the monic polynomials $\t
P_n(x)$ are orthogonal and satisfy the recurrence relation \be \t
P_{n+1}(x) +(\theta - A_n - C_{n+1}) \t P_n(x) + C_n A_{n} \t
P_{n-1}(x) = x \t P_n(x). \lab{3-term_tP} \ee The inverse
transformation from the polynomials $\t P_n(x)$ to the polynomials
$P_n(x)$ is given by the formula \be P_n(x) = \t P_n(x) - C_n \t
P_{n-1}(x) . \lab{GT} \ee
\end{lemma}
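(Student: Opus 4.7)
The proof is a standard Christoffel-transform computation centered at $x=\theta$. The plan has three ingredients: (i) show that $P_n(\theta)\ne 0$, so that $\t P_n(x)$ in \eqref{tP_CT1} is actually a polynomial; (ii) verify the inverse formula \eqref{GT} by direct reduction using the three-term recurrence; and (iii) derive the recurrence \eqref{3-term_tP} and conclude orthogonality of $\t P_n$ via Favard's theorem. For (i) I would prove by induction on $n\ge 1$ that
$$P_n(\theta)=A_{n-1}P_{n-1}(\theta).$$
The base case $n=1$ is $P_1(\theta)=A_0=A_0\,P_0(\theta)$. For the inductive step, evaluating \eqref{3-term_AC} at $x=\theta$ gives
$$P_{n+1}(\theta)=A_nP_n(\theta)+C_n\bigl[P_n(\theta)-A_{n-1}P_{n-1}(\theta)\bigr],$$
where the bracket vanishes by the inductive hypothesis. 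Iterating yields $P_n(\theta)=A_0A_1\cdots A_{n-1}\ne 0$ since each $A_k>0$. The identity $P_{n+1}(\theta)=A_nP_n(\theta)$ is exactly the statement that $x-\theta$ divides $P_{n+1}(x)-A_nP_n(x)$, so $\t P_n$ is well defined; monicity of degree $n$ is immediate since $P_{n+1}$ is monic of degree $n+1$ and $A_nP_n$ is lower order.

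For (ii), I compute
$$\t P_n(x)-C_n\t P_{n-1}(x)=\frac{P_{n+1}(x)-(A_n+C_n)P_n(x)+C_nA_{n-1}P_{n-1}(x)}{x-\theta},$$
and then rewrite \eqref{3-term_AC} as $P_{n+1}-(A_n+C_n)P_n+C_nA_{n-1}P_{n-1}=(x-\theta)P_n$, which makes the numerator collapse to $(x-\theta)P_n$ and establishes \eqref{GT}. For (iii), I would start from $x\t P_n=\theta\t P_n+P_{n+1}-A_nP_n$ (the definition of $\t P_n$), eliminate $P_{n+1}$ via \eqref{3-term_AC}, and then use the inverse formula from (ii) to replace $P_n=\t P_n-C_n\t P_{n-1}$ and $P_{n-1}=\t P_{n-1}-C_{n-1}\t P_{n-2}$. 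After collecting terms the entire identity carries a common factor of $C_n$, and dividing by $C_n>0$ (which is legal for $n\ge 1$) removes the $\t P_{n-2}$ term and, upon shifting $n\mapsto n+1$, yields exactly \eqref{3-term_tP}. The base case $n=0$ is verified by direct evaluation of $\t P_1$ from \eqref{tP_CT1}.

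Orthogonality of the monic sequence $\t P_n$ then follows from Favard's theorem, since the new off-diagonal coefficient $C_nA_n$ is strictly positive for every $n\ge 1$. The only delicate point in the whole argument is the induction in step (i): it hinges on the precise coupling between $A_n$ and $C_n$ in the recurrence \eqref{3-term_AC} (in particular, the product $C_nA_{n-1}$ with lag one), together with the initial value $P_1(\theta)=A_0$ built into the normalization; without this matching, the bracket in the inductive step would not vanish. The remaining manipulations in (ii) and (iii) are routine bookkeeping with the three-term recurrence.
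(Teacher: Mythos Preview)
Your proof is correct and is essentially the explicit, self-contained version of the paper's argument: the paper simply observes that \eqref{3-term_AC} provides the $LU$-factorization of the monic Jacobi matrix $J-\theta I$ and then cites the general Darboux/Christoffel machinery of \cite{ZhS},~\cite{BM}, whereas you unpack that machinery by hand via recurrence manipulations. In particular, your induction in~(i) is exactly the identity $A_n=P_{n+1}(\theta)/P_n(\theta)$ that the paper records separately, and your formulas in~(ii) and~(iii) are nothing but the bidiagonal factors $L$ and $U$ acting on the polynomial sequences.
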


To prove Lemma~\ref{lemma1} it is sufficient to observe that relation~\eqref{3-term_AC}
provides the explicit $LU$-factorization of the monic Jacobi matrix $J-\theta$
corresponding to the polynomials $P_n(x)$. The rest is an obvious reformulation
of the theory of spectral (or Darboux) transformations \cite{ZhS},
\cite{BM}.

Note also that \be A_n = \frac{P_{n+1}(\theta)}{P_n(\theta)}
\lab{A_PP} \ee which can easily be verified directly from
\re{3-term_AC}.

\begin{remark}
If the polynomials $P_n(x)$ are orthogonal with respect to a
weight function $w(x)$  then the polynomials $\t P_n(x)$ are orthogonal with respect to the
weight function $(x-\theta)w(x)$.
\end{remark}

Now, let us apply Lemma~\ref{lemma1} to $Q_n(x;\lambda)$ for   $\theta=\lambda+1$.
Namely,   we have that
\be
A_n= \left\{1-a_{n} \quad \mbox{if} \quad n \quad \mbox{is} \quad \mbox{even} \atop
\lambda(1-a_{n}) \quad \mbox{if} \quad n \quad \mbox{is} \quad \mbox{odd}  \right .,\quad
C_n= \left\{ \lambda(1+a_{n-1}) \quad \mbox{if} \quad n \quad \mbox{is} \quad \mbox{even} \atop
1+a_{n-1}\quad \mbox{if} \quad n \quad \mbox{is} \quad \mbox{odd}  \right ..
\lab{A_n_lambda_1} \ee
Then the transformed polynomials $\t Q_n(x;\lambda)$  satisfy the recurrence relation

\begin{equation}\label{UniF}
\t Q_{n+1}(x;\lambda)  +(-1)^n(\lambda-1) \t Q_{n}(x;\lambda) +
\lambda u_n^* \t Q_{n-1}(x;\lambda)=x \t Q_{n}(x;\lambda),
\end{equation}
where $u_n^*=(1+a_{n-1})(1-a_n)$ does not depend on $\lambda$.

It will sometimes be useful to renormalize the polynomials $\t Q_n(x;\lambda)$ as follows. 

\begin{lemma}\label{lemma3} Let the polynomials $\t Q_{n}(x;\lambda)$ satisfy~\eqref{UniF} and be orthogonal
with respect to a weight function $w_\lambda(x)$ on the set $E_\lambda$ for positive $\lambda$.
Then the monic polynomials defined via
\be
S_n(x;\lambda,\lambda_0)=(\sqrt{\lambda})^{-n}\t Q_n(\sqrt{\lambda}x;\lambda\lambda_0)
\ee
are orthogonal with respect to the weight function $w_\lambda(\sqrt{\lambda}x)$ on the set
$E_\lambda^*=\{x:\,\, \sqrt{\lambda}x\in E_\lambda\}$
and satisfy
\begin{equation}\label{UniF_1}
S_{n+1}(x;\lambda,\lambda_0)  +(-1)^n\chi S_{n}(x;\lambda,\lambda_0) +
\lambda_0u_n^* S_{n-1}(x;\lambda,\lambda_0)=x S_{n}(x;\lambda,\lambda_0),
\end{equation}
where $\chi=\lambda_0\sqrt{\lambda}-\frac{1}{\sqrt{\lambda}}$.

The converse is also true.
\end{lemma}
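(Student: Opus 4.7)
The statement is essentially a rescaling lemma, so my plan is to verify it by a direct substitution into the three--term recurrence~\eqref{UniF}, followed by an elementary change of variables in the orthogonality relation.

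First I would rewrite the defining formula as $\t Q_n(\sqrt\lambda\,x;\lambda\lambda_0)=(\sqrt\lambda)^n S_n(x;\lambda,\lambda_0)$. Since $\t Q_n$ is monic of degree $n$ in its first argument, the leading coefficient of $\t Q_n(\sqrt\lambda\,x;\lambda\lambda_0)$ in $x$ is exactly $(\sqrt\lambda)^n$, so the normalization by $(\sqrt\lambda)^{-n}$ forces $S_n$ to be monic; this takes care of one of the claims in the conclusion.

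Next I would substitute $y=\sqrt\lambda\,x$ and the parameter value $\lambda\lambda_0$ in place of $\lambda$ directly into~\eqref{UniF}. Using $\t Q_k(y;\lambda\lambda_0)=(\sqrt\lambda)^k S_k(x;\lambda,\lambda_0)$ at the three indices $k=n-1,n,n+1$ and then dividing both sides by the common factor $(\sqrt\lambda)^{n+1}$, the coefficient of $S_{n-1}$ becomes $\lambda\lambda_0 u_n^{*}/\lambda=\lambda_0 u_n^{*}$, the coefficient of $S_n$ on the left becomes $(-1)^n(\lambda\lambda_0-1)/\sqrt\lambda=(-1)^n(\lambda_0\sqrt\lambda-1/\sqrt\lambda)=(-1)^n\chi$, and the right-hand side reduces to $x\,S_n$. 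This reproduces~\eqref{UniF_1} exactly, so the recurrence part of the lemma is nothing more than collecting powers of $\sqrt\lambda$.

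The orthogonality part is handled by the same change of variable. If $\t Q_n(\cdot;\lambda\lambda_0)$ is orthogonal on $E_{\lambda\lambda_0}$ (which the paper writes compactly as $E_\lambda$ with the understanding that the weight notation is updated in parallel) against its weight $w$, then substituting $y=\sqrt\lambda\,x$ in the integral pulls the support back to $E_\lambda^{*}=\{x:\sqrt\lambda\,x\in E_\lambda\}$, pulls the weight back to $w(\sqrt\lambda\,x)$, and produces a Jacobian factor $\sqrt\lambda$ which is absorbed into the normalization. Using the identity $S_n(x)S_m(x)=\lambda^{-(n+m)/2}\t Q_n(\sqrt\lambda\,x)\t Q_m(\sqrt\lambda\,x)$, the orthogonality relation for $\{\t Q_n\}$ translates termwise into one for $\{S_n\}$ on $E_\lambda^{*}$ against $w(\sqrt\lambda\,x)$. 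Finally, the converse follows by running the same calculation backwards: given monic $\{S_n\}$ satisfying~\eqref{UniF_1}, define $\t Q_n(y;\lambda\lambda_0):=(\sqrt\lambda)^n S_n(y/\sqrt\lambda;\lambda,\lambda_0)$ and multiply~\eqref{UniF_1} (with $x=y/\sqrt\lambda$) through by $(\sqrt\lambda)^{n+1}$ to recover~\eqref{UniF}, with the change of variable in the integral inverted in the obvious way.

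There is no real obstacle here; the only point that deserves a careful parity check is the fact that the sign factor $(-1)^n$ is untouched by the rescaling, so that $(-1)^n(\lambda\lambda_0-1)/\sqrt\lambda$ cleanly factors as $(-1)^n\chi$ and one does not pick up an extra $(-1)^{n+1}$ on the diagonal. Keeping track of whether the underlying weight should be labelled $w_\lambda$ or $w_{\lambda\lambda_0}$ is a notational matter and I would make one sentence of explicit comment about it to avoid confusion when Lemma~\ref{lemma3} is invoked in later sections.
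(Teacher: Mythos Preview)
Your proof is correct and follows exactly the approach the paper takes: the authors simply state that the proof is straightforward by making the substitutions $\lambda\to\lambda\lambda_0$ and $x\to\sqrt{\lambda}\,x$ in~\eqref{UniF}, which is precisely what you carry out in detail. Your additional remarks on monicity, the orthogonality change of variables, and the converse are welcome elaborations but do not diverge from the paper's (one-line) argument.
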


The proof is straightforward by making the substitutions $\lambda\to\lambda\lambda_0$ and  $x\to\sqrt{\lambda}x$
in~\eqref{UniF}.

The above statement is the final step in the construction of the SDG map.  
The map from the polynomials $\Phi_n$ orthogonal on the unit circle to the polynomials $S_n$ defined 
by~\eqref{UniF_1} is called {\bf the Schur-Delsarte-Genin map}. Note that this map reduces to the Delsarte-Genin 
map~\cite{DG1} when $\lambda=\lambda_0=1$ (see~\cite{DVZh},~\cite{Zhe1} for more details).
 
To complete this section, we provide a lemma which will be useful in the identification  of polynomial
systems with known families of orthogonal polynomials.

\begin{lemma}[\cite{Chi_Bol}]\label{lemma2}
Let $P_n(x)$ and $\t P_n(x)$ be two systems of orthogonal
polynomials defined as in the previous Lemma. Also, assume that
the polynomials $P_n(x)$ are orthogonal with respect to a
weight function $w(x)$ on the finite interval $[\alpha,\beta]$ and take any
$\theta\in(-\infty,\alpha)$. Finally, define the following monic polynomials
\be S_{2n}(x) = P_n(x^2+\alpha-c^2) ,
\quad S_{2n+1}(x) = (x-\chi) \t P_n(x^2+\alpha-c^2), \lab{S_PP} \ee
where $c$ is a positive number such that $\sqrt{\alpha-\theta}\le c$
and $\chi$ is a real number defined by the relation $\theta=\chi^2+\alpha-c^2$.
Then the polynomials $S_n(x)$ are orthogonal with respect to the weight function
\be
(\sign(x))(x+\chi)w(x^2+\alpha-c^2)
\ee
on the union of two intervals $[-\sqrt{\beta-\alpha+c^2},-c]\cup [c,\sqrt{\beta-\alpha+c^2}]$
and satisfy the recurrence
relation \be S_{n+1} + (-1)^n \chi S_n(x) + v_n S_{n-1}(x) =
xS_n(x), \lab{rec_S1} \ee
where $v_n$ are given by \be v_{2n} =-C_n, \; v_{2n+1}=-A_n.
\lab{v_CA} \ee
The converse statement is also true.
\end{lemma}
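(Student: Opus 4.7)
The plan is to verify the three-term recurrence \eqref{rec_S1} by purely algebraic manipulations, and then to establish the orthogonality by a change of variable $y=x^2+\alpha-c^2$ that converts the two-interval problem into two copies of integrals on $[\alpha,\beta]$.

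For the recurrence, I would set $y=x^2+\alpha-c^2$ and exploit the key identity
\[
y-\theta \;=\; x^2-\chi^2 \;=\; (x-\chi)(x+\chi),
\]
which follows from $\theta=\chi^2+\alpha-c^2$. Using the Christoffel formula \eqref{tP_CT1} of Lemma~\ref{lemma1} in the form $(y-\theta)\tilde P_n(y)=P_{n+1}(y)-A_n P_n(y)$ and multiplying by nothing (since $(x-\chi)\tilde P_n(y)=S_{2n+1}$), one obtains
\[
(x+\chi)S_{2n+1}(x)\;=\;S_{2n+2}(x)-A_n S_{2n}(x),
\]
which rearranges into $xS_{2n+1}=S_{2n+2}-\chi S_{2n+1}-A_n S_{2n}$, confirming the odd-index recurrence with $v_{2n+1}=-A_n$. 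Similarly, the inverse relation \eqref{GT}, namely $P_n(y)=\tilde P_n(y)-C_n\tilde P_{n-1}(y)$, multiplied through by $(x-\chi)$ becomes
\[
(x-\chi)S_{2n}(x)\;=\;S_{2n+1}(x)-C_n S_{2n-1}(x),
\]
equivalent to $xS_{2n}=S_{2n+1}+\chi S_{2n}-C_n S_{2n-1}$, which gives the even-index recurrence with $v_{2n}=-C_n$. So the recurrence part is essentially a one-line consequence of Lemma~\ref{lemma1}.

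For orthogonality, I would split the measure on $[-\sqrt{\beta-\alpha+c^2},-c]\cup[c,\sqrt{\beta-\alpha+c^2}]$ into its two pieces, parametrize each by $y=x^2+\alpha-c^2$ (so $dx=\pm dy/(2|x|)$ with $y$ running over $[\alpha,\beta]$), and observe that the even-degree polynomials $S_{2n}$ are functions of $y$ alone while the odd-degree $S_{2n+1}$ carry one extra factor $(x-\chi)$. Computing $\int S_{2m}\,S_{2k}\,d\mu$, the contributions of the two pieces add up so that the $\chi$-terms cancel and one is left with $\int_\alpha^\beta P_m(y)P_k(y)\,w(y)\,dy$. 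Computing $\int S_{2m+1}\,S_{2k+1}\,d\mu$, the factor $(x-\chi)^2(x+\chi)=(x-\chi)(y-\theta)$ again symmetrizes on the two pieces to yield $\int_\alpha^\beta \tilde P_m(y)\tilde P_k(y)(y-\theta)w(y)\,dy$, which vanishes for $m\neq k$ by orthogonality of $\tilde P_n$ with respect to the Christoffel weight $(y-\theta)w(y)$ (cf.\ Remark after Lemma~\ref{lemma1}). The mixed products $\int S_{2m}\,S_{2k+1}\,d\mu$ carry an overall factor $(x-\chi)(x+\chi)\,\mathrm{sign}(x)=(y-\theta)\,\mathrm{sign}(x)$ and vanish automatically because the two pieces contribute opposite signs and equal magnitudes.

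Along the way I would verify positivity of the weight. Since $\theta<\alpha$, we have $\chi^2=c^2-(\alpha-\theta)\le c^2$, hence $|\chi|\le c$; this guarantees $x+\chi\ge0$ on the right piece and $-(x+\chi)\ge0$ on the left, so that $\mathrm{sign}(x)(x+\chi)\ge0$ throughout. The converse direction (recovering $P_n$, $\tilde P_n$ from a given $S_n$) is obtained by reading the same identities backwards. The main obstacle is nothing more than bookkeeping in the two-interval change of variable and keeping track of the sign function; once the identity $y-\theta=(x-\chi)(x+\chi)$ is in hand, everything reduces to Lemma~\ref{lemma1} and its Christoffel/Geronimus dictionary.
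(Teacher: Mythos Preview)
Your proposal is correct and follows essentially the same approach as the paper, which merely states that the orthogonality part ``can be checked by straightforward computations'' and that substituting \eqref{S_PP} into the three-term recurrences yields \eqref{rec_S1}--\eqref{v_CA}, referring to \cite{Chi_Bol} for details. Your argument is precisely the fleshing-out of that sketch: the key identity $y-\theta=(x-\chi)(x+\chi)$ combined with \eqref{tP_CT1} and \eqref{GT} gives the recurrence in one line each for odd and even indices, and the change of variable $y=x^2+\alpha-c^2$ on the two arcs reduces the orthogonality to that of $P_n$ with respect to $w$ and of $\tilde P_n$ with respect to $(y-\theta)w$.
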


The orthogonality part of this lemma can be checked by straightforward computations.
Substituting~\eqref{S_PP} into the corresponding three-term recurrence relations,
we arrive at~\eqref{rec_S1} and~\eqref{v_CA}. For more details, see~\cite{Chi_Bol}.

\section{Complementary Bannai-Ito polynomials}

In this section we show that the complementary Bannai-Ito polynomials can be obtained 
from the Schur-Delsarte-Genin transformation of some polynomials.

For the reader's convenience we begin with the explicit formulas for
the general Bannai-Ito polynomials~\cite{BI}: \be P_{n+1}(x) + (\rho_1 - A_n -
C_n)P_n(x) + A_{n-1}C_n P_{n-1}(x) =x P_n(x), \lab{rec_AC_BIP} \ee
where \be A_n=\left\{ { {\frac { \left( n+1+2\,{\rho_1}-2\,{r_1}
\right) \left( n+1 +2\,{\rho_1}-2\,{ r_2} \right) }{4(n+1-{ r_1}-{
r_2}+{\rho_1}+ {\rho_2})}} , \quad \mbox{} \quad n \quad
\mbox{even} \atop {\frac { \left( n+1-2\,{r_1}-2\,{
r_2}+2\,{\rho_1}+2\,{ \rho_2} \right) \left(
n+1+2\,{\rho_1}+2\,{\rho_2} \right) }{4(n+
1-{r_1}-{r_2}+{\rho_1}+{\rho_2})}}
  ,\quad\mbox{} \quad n \quad
\mbox{odd}} \right . \lab{A_BIP} \ee and \be C_n=\left\{ {
-{\frac {n \left( n-2\,{r_1}-2\,{r_2} \right) }{4(n-{r_1}-
{r_2}+{\rho_1}+{\rho_2})}}
, \quad \mbox{} \quad n \quad
\mbox{even} \atop -{\frac { \left( n-2\,{r_2}+2\,{\rho_2} \right)  \left( n-2
\,{r_1}+2\,{\rho_2} \right) }{4(n-{r_1}-{r_2}+{\rho_1}+{
\rho_2})}},
 \quad
\mbox{} \quad n \quad \mbox{odd}.} \right  .\lab{C_BIP} \ee

We see that the Bannai-Ito polynomials admit the
representation~\eqref{3-term_AC} from Lemma~\ref{lemma1}. So, it is possible to introduce 
their companion polynomials~\cite{TVZh12}, which we
denote by $W_n(x)$, as follows \be W_n(x) = \frac{P_{n+1}(x) - A_n
P_n(x)}{x-\rho_1} \lab{W_P} \ee with $A_n$ given by \re{A_BIP}.
The polynomials $W_n(x)$ satisfy the recurrence relation
\re{3-term_tP}. In our special case, this relation reads 
\begin{equation}\lab{rec_W}
W_{n+1}(x) +(-1)^n \rho_2 W_n(x) + v_n W_{n-1}(x) = x W_n(x),
\end{equation} where the coefficients $v_n$ have the following explicit form 
\begin{equation}\lab{v_BI}
\begin{split}
v_{2n} &= -\frac{n(n+\rho_1-r_1+1/2)(n+\rho_1-r_2+1/2)(n-r_1-r_2)}{(2n+1+g)(2n+g)},\\
v_{2n+1}&=-\frac{(n+g+1)(n+\rho_1+\rho_2+1)(n+\rho_2-r_1+1/2)(n+\rho_2-r_2+1/2)}{(2n+1+g)(2n+g+2)},
\end{split}
\end{equation} 
with $g=\rho_1+\rho_2-r_1-r_2$.

One can observe that the companion polynomials $W_n$ defined by~\eqref{rec_W} and the polynomials $S_n$ defined 
by~\eqref{UniF_1} are, in fact, of the same nature. So, we can therefore try to identify them. 
First, let us write the equality
\begin{equation}\label{fromQtoW1}
v_n=\lambda_0u_n^* =\lambda_0(1+a_{n-1})(1-a_{n}),
\end{equation}
for $n=0,1,2,\dots$. From~\eqref{fromQtoW1} we see that the parameters $a_n$ can be determined as follows
\begin{equation}
a_n=1-\frac{v_n}{\lambda_0(1+a_{n-1})},
\end{equation}
and  $a_0$, $\lambda_0$ are  arbitrary parameters. Setting
\[
 a_0=1-2\frac{(\rho_1-r_2+1/2)(-r_1-r_2)}{(-r_2-\rho_2-1/2)(g+1)}, \quad 
\lambda_0=\frac{1}{4}(-\rho_2-r_2-1/2)(\rho_2-r_2-1/2)
\]
we arrive at
\be
a_n = \left\{ 1-2\,{\frac { \left( \rho_1-r_2+(n+1)/2\right)  \left( -r_1-r_2+n/2 \right) }
{ \left( -\rho_2-r_2-1/2 \right)  \left( n+g+1 \right) }}, \quad n
\quad \mbox{even}    \atop   
1-2\,{\frac {\left( g+(n+1)/2\right)  \left( \rho_2-r_2+n/2 \right) }{ \left( \rho_2-r_2-1/2 \right)  \left( n+g+1 \right) }},
\quad n \quad \mbox{odd.}   \right. \lab{a_Bannai} \ee

To complete the identification, it remains to take 
\begin{equation}\label{lforBI}
\lambda=\lambda_{BI}=\frac{4}{(-\rho_2-r_2-1/2)^2}
\end{equation} 
to ensure that  
\[
\lambda_0\sqrt{\lambda_{BI}}-\frac{1}{\sqrt{\lambda_{BI}}}=\rho_2.
\]

\begin{remark}
First, notice that 
\begin{equation}\label{QRforK}
\begin{split}
K^2(\lambda\lambda_0)=&(L+\lambda\lambda_0M)(L+\lambda\lambda_0M)=
(1-\lambda\lambda_0)^2I+\lambda(L+M)^2\\
=&\frac{1}{\lambda}\left[(\lambda_0\sqrt{\lambda}-\frac{1}{\sqrt{\lambda}})^2I+\lambda_0(L+M)^2\right]\\
=&\frac{1}{\lambda}\left[\chi^2I+\lambda_0(L+M)^2\right], \quad \lambda\ne 0.
\end{split}
\end{equation}
From~\eqref{QRforK} and the spectral mapping theorem we see that $\sigma(K(\cdot))$ consists of
numbers that are the square root of a quadratic polynomial in $\lambda$.
Since it is known that the orthogonality measure for the Bannai-Ito polynomials 
(and, thus, for the complementary Bannai-Ito polynomials) has a uniform grid of mass points~\cite{TVZh12}, 
one can conclude that, for the point $\lambda=\lambda_0\lambda_{BI}$ and the coefficients 
$a_n$ defined in~\eqref{a_Bannai}, the quadratic polynomial $\chi^2+\lambda_0\mu_n^2$, where
$\mu_n$ is an eigenvalue of $L+M$, becomes an exact square. 
\end{remark}


\section{Racah-Wilson polynomials on the unit circle}

We see from~\eqref{QRforK} that in order to characterize the spectrum of $K(\lambda)$ we need
to know that spectrum precisely, at one point $\lambda$ at least. 
In this section, we identify such a determining point and show that this point is 
the preimage of the SDG map corresponding to the symmetrized Racah-Wilson polynomials.

 Recall~\cite{KS} that the Wilson polynomials $\t W_n(x)$ are defined through the 3-term recurrence relation
\be
\t W_{n+1}(x) + (A_n+C_n-\beta_1^2) \t W_n(x) + A_{n-1}C_n \t W_{n-1}(x) = x\t W_n(x) \lab{recW} \ee  
where $\sigma = \beta_1+\beta_2+\beta_3+\beta_4$ and the coefficients $A_n$, $C_n$ are as follows
\be
A_n=\frac{(n+ \sigma-1)(n+\beta_1+\beta_2)(n+\beta_1+\beta_3)(n+\beta_1+\beta_4)}{(2n+\sigma-1)(2n+\sigma)}, \lab{A_W} \ee
\be
C_n=\frac{n(n+\beta_2+\beta_3-1)(n+\beta_2+\beta_4-1)(n+\beta_3+\beta_4-1)}{(2n+\sigma-1)(2n+\sigma-2)}. \lab{C_W} \ee
We shall refer to the above polynomials as the Racah-Wilson polynomials 
since for the discrete measure case (which is essential in the study of the Bannai-Ito polynomials), they represent the same
type of polynomials.

Regarding the aim of this section, let us first note that it was already shown in~\cite{TVZh12} 
that the complementary Bannai-Ito polynomials can be constructed from 
the Racah-Wilson polynomials. Namely, the complementary Bannai-Ito polynomials are obtained from the Racah-Wilson polynomials 
by applying the Chihara construction (in other words, by using Lemma~\ref{lemma2}).

We can repeat this construction for our purpose but it is in fact enough to observe that 
setting $\lambda=1/{\lambda_0}$ at the last stage of the identification in the previous section leads 
to the following symmetric polynomials
\be
S_{n+1}(x;\lambda_0) + v_n S_{n-1}(x;\lambda_0) = xS_n(x;\lambda_0). \lab{rec_SW} \ee

 Clearly, the polynomials $i^nS_n(ix;\lambda_0)$ are symmetrized Racah-Wilson polynomials.
More precisely, the polynomials $i^nS_n(ix;\lambda_0)$ can be obtained from $\t W_n(x)$ using Lemma~\ref{lemma2}
with $\chi=0$ as desired. Obviously, we have
\[
\beta_1=\rho_2,\quad \beta_2=1+\rho_1,\quad \beta_3=-r_1+\frac{1}{2},\quad \beta_4=-r_2+\frac{1}{2}.
\]
It should be stressed that the choice $\chi=0$ (that is $\lambda\lambda_0=1$ and 
the corresponding Jacobi matrix being $K(1)=L+M$) corresponds to the classical Delsarte-Genin map.
Therefore, the orthogonal polynomials associated with the reflection parameters  
\be
a_n = \left\{ 1-2\,{\frac { \left( \beta_{{2}}+\beta_{{4}}+n/2-1 \right)\left(\beta_{{3}}+\beta_{{4}}+n/2-1\right)}
{\left( \beta_{{4}}-\beta_{{1}}-1\right)  \left( n+\sigma -1\right) }}, \quad n
\quad \mbox{even}    \atop   1-2\,{\frac { \left( \sigma+n/2-3/2 \right)\left( \beta_{{1}}+\beta_{{4}}+n/2-1/2 \right)}
{ \left( \beta_{{1}}+\beta_{{4}} -1\right)  \left( n+\sigma-1 \right) }},
\quad n \quad \mbox{odd}   \right. \lab{a_Wilson} \ee
are the preimages of the symmetrized Racah-Wilson polynomials under the Delsarte-Genin map.


\section{Bannai-Ito polynomials}

In this section we show how the Bannai-Ito polynomials appear in the framework of the Schur-Delsarte-Genin maps.

First, notice that the complementary Bannai-Ito polynomials are obtained from the Bannai-Ito polynomials through a
Christoffel transformation. Actually, the same can be said about the polynomials $S_n$ and $Q_n$  defined 
by~\eqref{UniF_1} and~\eqref{rec_Q_lambda}, respectively. Since we have already identified $S_n$ with
the complementary Bannai-Ito polynomials, one could expect that there is a similar relation between the Bannai-Ito 
polynomials and the polynomials $Q_n$.
However, it is easy to see from~\eqref{A_n_lambda_1}, ~\eqref{A_BIP},~\eqref{C_BIP}, and~\eqref{a_Bannai} that 
the polynomials $Q_n$ cannot be identified with the Bannai-Ito polynomials.  Generally speaking, the reason is 
that they are related by a Uvarov transformation and do not coincide. Nevertheless, the Uvarov transformation 
can be realized via the following substitution
\begin{equation}\label{Subs1}
\rho_1\mapsto -r_2-\frac{1}{2},\quad \rho_2\mapsto\rho_1,\quad 
r_1\mapsto r_1,\quad r_2\mapsto -\rho_2-\frac{1}{2}. 
\end{equation}

Indeed, applying~\eqref{Subs1} to~\eqref{a_Bannai} gives

\be
a_n = \left\{ 1-2\,{\frac { \left( \rho_1-r_2+(n+1)/2\right)  \left( \rho_1-r_1+(n+1)/2 \right) }
{ \left( \rho_1-\rho_2\right)  \left( n+g+1 \right) }}, \quad n
\quad \mbox{even}    \atop   
1-2\,{\frac {\left( g+(n+1)/2\right)  \left( \rho_1+\rho_2+(n+1)/2 \right) }{ \left( \rho_1+\rho_2 \right)  \left( n+g+1 \right) }},
\quad n \quad \mbox{odd.}   \right. \lab{a_Bannai_1} \ee

Next, setting 
\[
 \lambda=\lambda_0\lambda_{BI}=\frac{\rho_2+\rho_1}{\rho_2-\rho_1}
\]
the calculation of $b_n(\lambda)$ and $u_n(\lambda)$ (see formulas~\eqref{b_lambda} and~\eqref{u_lambda}) 
for these new coefficients $a_n$ leads to the following equalities
\begin{equation}\label{QtoBI}
 b_n(\lambda)=\frac{2}{\rho_2-\rho_1}(\rho_1-A_n-C_n),\quad
u_n(\lambda)=\frac{4}{(\rho_2-\rho_1)^2}A_{n-1}C_n,
\end{equation}
where $A_n$ and $C_n$ are defined by~\eqref{A_BIP} and~\eqref{C_BIP}, respectively.

According to Lemma~\ref{lemma3} and~\eqref{QtoBI}, we may see that the polynomials $Q_n(x; \lambda)$ 
and the Bannai-Ito polynomials~\eqref{rec_AC_BIP} coincide up to a renormalization.

\begin{remark}
To explain the choice of $\lambda=\lambda_0\lambda_{BI}$, notice that the complementary Bannai-Ito polynomials are
the Christoffel transforms of the Bannai-Ito polynomials and that this $\lambda$ was picked for the SDG map to give 
the complementary Bannai-Ito polynomials. Thus, if we want to get the Bannai-Ito polynomials as the polynomials $Q_n$,
the same $\lambda$ has to be taken under the transformation~\eqref{Subs1}. 

It is also natural from the point of view of the mass points of the orthogonality measure. 
Indeed, the sets of mass points for the Bannai-Ito polynomials and the complementary Bannai-Ito polynomials 
should be the same except possibly for one point. 
\end{remark}

Finally, let us  summarize the results on the Bannai-Ito polynomials.
\begin{theorem}
 The following statements hold true:
\begin{enumerate}
 \item The complementary Bannai-Ito polynomials are the SDG images of the Szeg\H{o} polynomials corresponding
to the parameters~\eqref{a_Bannai}.
\item The Bannai-Ito polynomials coincide with the polynomials $Q_n$ defined by~\eqref{rec_Q_lambda} and 
in this case the parameters $a_j$ are given by~\eqref{a_Bannai_1}.
\item The difference between~\eqref{a_Bannai} and~\eqref{a_Bannai_1} is the
transformation~\eqref{Subs1}. 
\end{enumerate}

\end{theorem}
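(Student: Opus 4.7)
The theorem is essentially a compilation of the identifications carried out in Sections~5 and~7, so my plan is to split the argument into the three items and in each case point to the explicit recurrence-matching that must be verified.

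For part (1), I would revisit the polynomials $\tilde Q_n(x;\lambda)$ produced by the SDG map, which by Lemma~\ref{lemma3} satisfy the recurrence \eqref{UniF_1} with $u_n^*=(1+a_{n-1})(1-a_n)$ and three-term coefficient $\chi=\lambda_0\sqrt{\lambda}-1/\sqrt{\lambda}$. On the Bannai--Ito side, the companion polynomials $W_n(x)$ satisfy \eqref{rec_W}--\eqref{v_BI}. The plan is to impose the matching $v_n=\lambda_0 u_n^*$ and solve the resulting first-order recursion for $a_n$ with the initial datum $a_0$ and overall scale $\lambda_0$ chosen as in Section~5. A short induction then yields the closed form \eqref{a_Bannai}, and the choice \eqref{lforBI} of $\lambda=\lambda_{BI}$ is forced by $\chi=\rho_2$. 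This identifies the $W_n$ with the SDG images $S_n(x;\lambda_{BI},\lambda_0)$ up to the normalization from Lemma~\ref{lemma3}.

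For part (2), I would carry out the substitution \eqref{Subs1} on the coefficients \eqref{a_Bannai} to obtain \eqref{a_Bannai_1}, and then directly compute $b_n(\lambda)$ and $u_n(\lambda)$ from \eqref{b_lambda}, \eqref{u_lambda} at the specific value $\lambda=(\rho_2+\rho_1)/(\rho_2-\rho_1)$. The goal is to verify the identities \eqref{QtoBI}, which express $b_n(\lambda)$ and $u_n(\lambda)$ in terms of the Bannai--Ito coefficients $A_n,C_n$ of \eqref{A_BIP}--\eqref{C_BIP}. Once \eqref{QtoBI} is in hand, a rescaling $x\mapsto\sqrt{\lambda}\,x$ of the same flavor as in Lemma~\ref{lemma3} turns the $Q_n$-recurrence \eqref{rec_Q_lambda} into the Bannai--Ito recurrence \eqref{rec_AC_BIP}, establishing the coincidence up to renormalization.

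For part (3), nothing is left to do beyond comparing \eqref{a_Bannai} with \eqref{a_Bannai_1}: the symbolic substitution $\rho_1\mapsto -r_2-1/2$, $\rho_2\mapsto\rho_1$, $r_1\mapsto r_1$, $r_2\mapsto -\rho_2-1/2$ transforms one into the other, which is the content of \eqref{Subs1}.

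The main obstacle is the verification of \eqref{QtoBI} in part (2): after the substitution \eqref{Subs1}, the even/odd definitions of $a_n$ in \eqref{a_Bannai_1} feed into \eqref{b_lambda} and \eqref{u_lambda} through products of four linear factors, and one must check that the resulting expressions telescope to the Bannai--Ito coefficients with the denominators $(2n+g)(2n+g+2)$ appearing correctly. Once the bookkeeping of the even and odd cases is organized so that the shifts $n\mapsto n-1$ in $a_{n-1}$ match the half-integer shifts inside the Bannai--Ito factors, the identities reduce to elementary algebraic cancellations; the remaining verification is routine but deserves care because a single wrong sign propagates through both parts (1) and (2).
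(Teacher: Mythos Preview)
Your plan is correct and matches the paper's approach: the theorem is a summary of the identifications already performed in Sections~5 and~7, and the paper gives no separate proof beyond pointing back to those computations. One small slip to fix in part~(2): the rescaling that converts the $Q_n$-recurrence into \eqref{rec_AC_BIP} via \eqref{QtoBI} is $x\mapsto \tfrac{2}{\rho_2-\rho_1}\,x$, not $x\mapsto\sqrt{\lambda}\,x$; the factor $2/(\rho_2-\rho_1)$ is read off directly from \eqref{QtoBI}, and it does not equal $\sqrt{\lambda}$ for $\lambda=(\rho_2+\rho_1)/(\rho_2-\rho_1)$.
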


\bigskip\bigskip
{\Large\bf Acknowledgments}

\bigskip

M.D. and A. Zh. thank the University of Montreal for its hospitality in the course of this study. 
The research of LV is supported in part by a research grant from the Natural Sciences and Engineering Research 
Council (NSERC) of Canada.

\end{document}